\newtheorem{theorem}{Theorem}
\newtheorem{lemma}{Lemma}
\theoremstyle{definition}
\newtheorem*{define}{Definition}
\begin{document}
\title{Is the Random Tree Puzzle process the same as the Yule--Harding process?}

\author{{%
Sha Zhu,$^\ast$ 
and 
Mike Steel$^\ast$}\\
\affiliation{$^\ast$Department of Mathematics and Statistics, University of Canterbury, 
Private Bag 4800, 
Christchurch, 8140, 
New Zealand}}

\authorlist{Zhu et al.}
\shorttitle{The Random Tree Puzzle process and the Yule--Harding process}

\mbeabstract{%
It has been suggested that a Random Tree Puzzle (RTP) process leads to a Yule--Harding (YH) distribution, when the number of taxa becomes large. In this study, we formalize this conjecture, and we prove that the two tree distributions converge for two particular properties, which suggests that the conjecture may be true. However, we present evidence that, while the two distributions are close, the RTP appears to converge on a different distribution than does the YH. }

\keywords{phylogenetic tree, Tree-puzzle, Poly\'{a} urn, centroid vertex.}

\email{mike.steel@canterbury.ac.nz}

%\mbestyle

\section{Introduction}
The Maximum likelihood (ML) approach (Felsenstein 1981; Guindon and Gascuel 2003; Guindon et~al. 2010) is generally considered to be a reliable way of estimating phylogenies from DNA sequences.
However, ML is not always feasible for large numbers of species, because of the intensive computation required. Methods that use `four point subsets' (Dress et~al. 1986) reduce the complexity of the problem, and have assisted numerous studies. (Daubin and Ochman 2004; Nieselt-Struwe and {von Haeseler} 2001; Strimmer et~al. 1997; Strimmer and {von
  Haeseler} 1996).

The four points subtree is known as the quartet tree. {\it Quartet puzzling} (QP) (Strimmer and {von Haeseler} 1996) is an algorithm to infer a tree on $n$ taxa by using the quartet trees derived from DNA sequences. It firstly computes the likelihood of all $\binom{n}{4}$ quartets. As there are three possible topologies for any four taxa, the quartet tree which returns the greatest ML value is used (any ties are broken uniformly at random). At the puzzling step, the order of inserting new leaf nodes is randomized. A seed tree is built from the first four elements of the ordered leaf node sequence.  From this point on, leaves are attached sequentially by the following procedure:
when a new leaf $x$ is to be attached to the existing tree $T$,
quartet trees are built from quartets formed from $x$ and all subsets of size three are chosen from the existing leaf set. If the ML quartet tree of $\{i,j,k,x\}$ is $ij|kx$, then weight 1 is added to the edges on the path in $T$ connecting the two leaves $i$ and $j$. This process is repeated for all such quartet trees, and  $x$ is then attached to the edge which has the minimal weight. An example is given in Figure~\ref{fig:tp_alg}.

\begin{figure}[h]
\centering
\includegraphics[width=8cm]{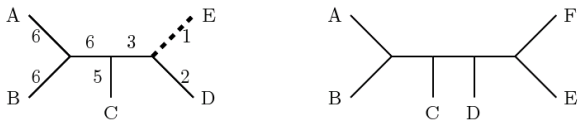}
\caption{Suppose leaf $F$ is about to be attached to the five-taxon tree on the left, and the ML trees of $\{i,j,k,F\}$ are $AB|CF$, $AC|EF$, $BC|DF$, $AC|DF$, $AB|DF$, $AD|EF$, $AB|EF$, $BC|EF$, $BD|EF$, and $CE|DF$. The external edge leading to $E$ returns the minimal weight, so $F$ is attached to this edge, leading to the six-taxon tree shown shown on the right.}\label{fig:tp_alg}
\end{figure}

Since the order of adding leaves is randomized, this can lead to variation in the resulting tree topologies, and so a consensus tree of numerous replicates is used as the output tree. The program {\it Tree-puzzle} (TP) (Schmidt et~al. 2002) is a parallel version of QP, which performs independent puzzling steps simultaneously.

The trees generated by either the QP or TP process depend on the biological sequences we have for the taxa. To investigate how the TP process behaves on randomized quartets, Vinh et~al. (2011) performed a simulation study on a so-called {\it random tree puzzle} (RTP) process. This assumes that no prior molecular information is given.
Therefore, for the same quartet set, all three tree topologies are equally likely.
The authors compare the empirical probabilities of tree topologies against the theoretical probabilities from the {\it proportional to distinguishable arrangement} (PDA) model and the {\it Yule-Harding} (YH) model.
Table 1 from Vinh et~al. (2011) reveals that the RTP's empirical probabilities are very close to the YH theoretical probabilities (indeed, there are two cases where these probabilities are identical). As it seems that the differences between the empirical and theoretical probabilities decrease as the number of taxa increases, Vinh et~al. (2011) suggest that the RTP process converges to the YH process as $n$ (the number of taxa) grows.
The authors provided further evidence for their conjecture by comparing some properties of  RPT trees with YH trees. Recall that a {\em cherry} in a tree is a pair of leaves that
are adjacent to the same vertex.  Then Vinh et~al. (2011) found that the mean and variance of the number of cherries were similar under the RTP simulation and the theoretical value under the YH process (McKenzie and Steel 2000).

Although Vinh et~al. (2011) provided evidence to suggest the two distributions appear to become very similar as $n$ grows, they did not provide a formal statement or proof of their claim that the two distributions converge. In this project, we investigate the RTP process further using mathematical and statistical methods.  
Our results demonstrate that certain properties of the trees that are near the `periphery' of the tree (i.e. near the leaves) converge under the two distributions;  however the `deep'
structure of the trees (how the tree is broken up around its centroid) appears to retain a trace that distinguishes the two models as the trees become large.

\section{Formalized Conjecture}
Given two discrete probability distributions $p$ and $q$ on $Y$, the {\it total variational distance} between $p$ and $q$ is defined as: $$d_{\rm VAR}(p,q)=\max_{A\subseteq Y}\left|\mathbb{P}_p(A)-\mathbb{P}_q(A)\right|, $$
where $\displaystyle{\mathbb{P}_p(A)=\sum_{y\in A}p(y)}$ and $\displaystyle \mathbb{P}_q(A)=\sum_{y\in A}q(y)$ are the probabilities of event $A$ under the distributions $p$ and $q$ respectively. Thus $d_{\rm VAR}(p,q)$ is the largest possible probability difference of any event under the distributions $p$ and $q$. A well-known and elementary result is that $\displaystyle d_{\rm VAR}(p,q)=\frac{1}{2}\sum_{y\in Y}\left|p(y)-q(y)\right| $, and thus the two distribution are the same if $d_{\rm VAR}(p,q)=0$.

A tree with the leaf set $X_n=\{1,2,\dots,n\}$ is called an {\it $X_n$-tree}. In the rest of this article, all $X_n$-trees referred to are binary trees, where the interior nodes have degrees of three. We use $T_n$ to denote a labeled $X_n$-tree topology, and $t_n$ to denote an unlabeled $X_n$-tree shape.
Vinh et~al. (2011) suggest that when the number of taxa ($n$)  becomes large, RTP converges to the YH distribution.
In this study, we consider the total variational distance between the tree topologies distributions between the RTP and the YH process, and formalize the  conjecture from Vinh et~al. (2011). This formalization states that the variational distance between the two tree distributions converges to zero as the number of taxa added  grows. We first note that it makes no difference to the truth of this conjecture whether the trees are labeled or unlabeled.

\begin{lemma}
Let $\mathcal{T}(n)$ and $\mathcal{S}(n)$ be the set of labeled and unlabeled $X_n$-trees respectively. For $T_n \in \mathcal{T}(n)$, and $t_n \in \mathcal{S}(n)$, let $\displaystyle\Delta_n:=\sum_{ T_n \in \mathcal{T}(n) } \left|\mathbb{P}_{\rm YH}(T_n)-\mathbb{P}_{\rm RTP}(T_n) \right|$ and $\displaystyle\delta_n:=\sum_{ t_n \in \mathcal{S}(n) } \left|\mathbb{P}_{\rm YH}(t_n)-\mathbb{P}_{\rm RTP}(t_n) \right|$. Then, $\Delta_n=\delta_n,$
and in particular $\displaystyle\lim_{n\rightarrow\infty}\Delta_n=0 \Longleftrightarrow  \lim_{n\rightarrow\infty}\delta_n=0,$
as $n\rightarrow\infty$.
\end{lemma}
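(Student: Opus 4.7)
The plan is to show $\Delta_n=\delta_n$ directly; the equivalence of the two limits is then immediate, since both sequences are nonnegative and equal term by term. The key property I will exploit is that both the YH and the RTP distributions on $\mathcal{T}(n)$ are \emph{exchangeable}, i.e. invariant under relabeling of the leaves.

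First I would verify the exchangeability of each distribution. For YH this is standard: the process picks a uniformly random order of the $n$ labels and grows the tree by attaching each new leaf to a uniformly chosen pendant edge, and because the order is uniform, permuting labels leaves the distribution on $\mathcal{T}(n)$ unchanged. For RTP the same holds, for two reasons acting in concert: the insertion order of the $n$ taxa is uniformly random, and for each quartet $\{i,j,k,x\}$ the three possible resolutions are drawn uniformly at random, independently of the labels involved. Consequently, for any permutation $\pi$ of $X_n$ and any $T_n\in\mathcal{T}(n)$, $\mathbb{P}_{\rm RTP}(\pi T_n)=\mathbb{P}_{\rm RTP}(T_n)$, and the same identity holds for YH.

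Once exchangeability is in hand, I would note that for any labeled tree $T_n$ of shape $t_n$,
\[
\mathbb{P}_{\rm YH}(T_n)=\frac{\mathbb{P}_{\rm YH}(t_n)}{N(t_n)},\qquad \mathbb{P}_{\rm RTP}(T_n)=\frac{\mathbb{P}_{\rm RTP}(t_n)}{N(t_n)},
\]
where $N(t_n)$ is the number of distinct labelings of the shape $t_n$ by $X_n$ (equivalently, $n!$ divided by the size of the automorphism group of $t_n$). This is the orbit-counting consequence of exchangeability: each orbit of the $S_n$-action on $\mathcal{T}(n)$ corresponds to one shape $t_n$ and contains exactly $N(t_n)$ labeled trees, each carrying equal probability under both distributions.

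Finally I would partition the sum defining $\Delta_n$ according to shape:
\[
\Delta_n=\sum_{t_n\in\mathcal{S}(n)}\sum_{T_n\text{ of shape }t_n}\left|\frac{\mathbb{P}_{\rm YH}(t_n)-\mathbb{P}_{\rm RTP}(t_n)}{N(t_n)}\right|=\sum_{t_n\in\mathcal{S}(n)}N(t_n)\cdot\frac{|\mathbb{P}_{\rm YH}(t_n)-\mathbb{P}_{\rm RTP}(t_n)|}{N(t_n)}=\delta_n,
\]
which gives $\Delta_n=\delta_n$ and hence the stated equivalence of limits. The only step needing care is the exchangeability of RTP; once that symmetry is pinned down the rest is purely bookkeeping, so I anticipate no real obstacle beyond being explicit about why uniformly random quartet resolutions together with a uniformly random insertion order guarantee label symmetry.
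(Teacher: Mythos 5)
Your proposal is correct and follows essentially the same route as the paper: both rest on the identity $\mathbb{P}_*(T_n)=\mathbb{P}_*(t_n)/\nu(t_n)$ (your $N(t_n)$ is the paper's $\nu(t_n)$) and then partition the sum over labeled trees by shape. The only difference is that you explicitly justify this identity via exchangeability of the YH and RTP processes, whereas the paper asserts it without comment; your added care is a point in your favour, not a divergence in method.
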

\begin{proof}
Let $\nu(t_n)$ be the number of $X_n$-trees $T_n$ that have the shape $t_n$. Then, for $*\in\{{\rm{YH}},{\rm{RTP}}\}$, $\displaystyle\mathbb{P}_*(T_n)=\frac{\mathbb{P}_*(t_n)}{\nu(t_n)}$, we have:
\begin{align*}
\Delta_n&=\sum_{ T_n \in \mathcal{T}(n) } \left|\mathbb{P}_{\rm YH}(T_n)-\mathbb{P}_{\rm RTP}(T_n) \right|\\
&=\sum_{ t_n \in \mathcal{S}(n) } \sum_{ \stackrel{T_n\in \mathcal{T}(n)}{T_n \text{ has shape }t_n}}\left|\mathbb{P}_{\rm YH}(T_n)-\mathbb{P}_{\rm RTP}(T_n) \right|\\
&=\sum_{ t_n \in \mathcal{S}(n) } \nu(t_n) \left|\frac{\mathbb{P}_{\rm YH}(T_n)}{\nu(t_n)}-\frac{\mathbb{P}_{\rm RTP}(T_n)}{\nu(t_n)} \right|\\
&=\sum_{ t_n \in \mathcal{S}(n) } \left|\mathbb{P}_{\rm YH}(t_n)-\mathbb{P}_{\rm RTP}(t_n) \right|\\
&=\delta_n.
\end{align*}

\end{proof}

Thus, we formalize the conjecture from Vinh et~al. (2011) as follows:\\

{\noindent{\bf Conjecture (strong version)}

With $\Delta_n=\delta_n$ defined as above, $\displaystyle \lim_{n\rightarrow\infty}\Delta_n=0.$
}

\bigskip

Note that, in the YH process, new leaves are only ever attached to pendant edges, and each pendant edge is selected with equal probability. We say that such leaves are attached to {\em uniformly selected pendant edges}.
 By contrast, the RTP process can attach new leaves to any edge, although RTP has an increasingly strong preference to attach leaves to pendant edges as the tree grows (Vinh et~al. 2011). These authors also suggested that as the tree grows, the number of cherries of a RTP tree follows the same limiting distribution as the number of cherries of a YH tree, which is normally distributed.
We summarize these two claims as follows:\\
{\noindent{\bf Conjecture (weak version)}

\begin{enumerate}
\item\label{conj_p}Let $\mathcal{E}_m$ be the event that {\em all} leaf attachments under the RTP beyond the first $m$ leaves,  are to uniformly selected pendant edges.
Then
$\mathbb{P}(\mathcal{E}_m)\rightarrow 1$, as $m$ tends to infinity.
\item\label{conj_cherry} The distribution of cherries converges to the same (asymptotic) normal distribution as the YH model.
\end{enumerate}
}

In our paper, we prove the two parts of the weak conjecture, and present statistical evidence that the strong conjecture is not true.

\section{RTP is similar to YH when {\it n} is large}
To verify Part \ref{conj_p} of the weak conjecture, we need to establish that the probability that a new leaf attaches to a pendant edge converges to 1 sufficiently quickly as the number of leaves increases. This requires that the pendant edges carry less weight than the interior edges. In addition, when the new leaf is added, all pendant edges must be equally likely to be chosen. Thus we must check the edge weight distribution during the puzzling step of the RTP process.

\subsection{Distribution of edge weights}

Let $E_n^{\rm P}$ denote the set of pendant edges of current $X_n$-tree $T_n$ and let $E_n^{\rm I}$ be the set of interior edges. For any edge $e$ of $T_n$, we let $W(e)$ denote the random variable edge weight during the quartet puzzling step.
Suppose edge $e$ has $k$ leaves of $T_n$ on one side and $n-k$ leaves of $T_n$ on the other side.  The following result is established in the Appendix.

\begin{lemma}
\label{centrallem}
$W(e)$ is a binomial random variable with the parameters $\frac{k(n-k)(n-2)}{2}$ as the number of trials and $\frac{2}{3}$ as the probability of success on each trial.
\end{lemma}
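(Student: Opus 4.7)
The plan is to identify the independent Bernoulli trials contributing to $W(e)$, show that each has success probability $\frac{2}{3}$, and count them. First I would fix the edge $e$ of $T_n$ which has leaf set $A$ of size $k$ on one side and leaf set $B$ of size $n-k$ on the other. During puzzling for the new leaf $x$, the only triples $\{i,j,\ell\}\subseteq A\cup B$ that can contribute any weight to $e$ are those whose leaves lie on both sides of $e$ (i.e.\ ``mixed'' triples). Indeed, whenever weight is added to $e$, it comes from a path between two leaves $p,q$ of $T_n$ that crosses $e$, so $p$ and $q$ must lie on opposite sides of $e$.

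Next I would observe that the mixed triples are exactly those with either two leaves in $A$ and one in $B$, or one leaf in $A$ and two in $B$. A direct count gives
\[
\binom{k}{2}(n-k)+k\binom{n-k}{2}=\frac{k(n-k)(n-2)}{2},
\]
which matches the number of trials asserted in the lemma.

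For each such mixed triple $\{i,j,\ell\}$, there are three possible resolved quartets on $\{i,j,\ell,x\}$, namely $ij|\ell x$, $i\ell|jx$ and $j\ell|ix$. Exactly two of these three topologies place the partner of $x$ on the same side of $e$ as $x$'s complement across the triple, i.e.\ exactly two of the three pair off leaves on opposite sides of $e$ and so contribute weight $1$ to $e$. Under the RTP model, each of the three quartet topologies is chosen uniformly at random, so the probability that this particular triple contributes to $W(e)$ is $\frac{2}{3}$; the triples with all three leaves on one side contribute weight $0$ deterministically and can be ignored.

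The final step is to note that under RTP the quartet topology choices are made independently across the $\binom{n}{3}$ triples containing $x$, so the indicator random variables for ``this mixed triple contributes to $W(e)$'' are independent Bernoulli$(2/3)$. Summing them gives the stated binomial distribution. I do not expect any substantive obstacle; the only step requiring care is the verification that for every mixed triple exactly two of the three quartet resolutions use a cross-$e$ pair as the partner of $x$, which is a small case check on whether the singleton side of the triple matches the pairing or not.
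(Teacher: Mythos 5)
Your proof is correct and follows essentially the same route as the paper's: both arguments classify the triples $\{i,j,\ell\}$ by how they split across $e$, observe that one-sided triples contribute nothing while each mixed triple contributes weight $1$ with probability $\frac{2}{3}$, count the mixed triples as $\binom{k}{2}(n-k)+k\binom{n-k}{2}=\frac{k(n-k)(n-2)}{2}$, and invoke independence of the quartet resolutions. The only cosmetic difference is that the paper splits the mixed triples into two separate cases (two leaves in $A$ versus two in $B$) before recombining, whereas you treat them together.
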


The parameter $k$ takes the value $1$ or $n-1$ for a pendant edge;  for an interior edge, $k$ lies between $2$ an $n-2$. Next, we show that for any fixed pendant and interior edge, the probability that the interior edge has lower weight converges to zero exponentially fast with increasing $n$.
More precisely, for any $e''\in E_n^{\rm P}$ and any $e'\in E_n^{\rm I}$, we establish the following result in the Appendix.
\begin{equation}
\mathbb{P}\left(W_n(e'')\geq W_n(e')\right)\leq 2\exp(-\frac{1}{576}n)\label{ineq:weight}.
\end{equation}
This result is for a fixed pair of pendant and interior edges, but it easily implies  that the probability that the smallest weight in the tree is on a pendant rather than an interior edge converges quickly to 1 with increasing $n$. This is formalized in the following inequality, also proved in the Appendix:
\begin{equation}
\mathbb{P}\left(\min_{e\in E_n^{\rm P}}\{W_n(e'')\} \leq \min_{e'\in E_n^{\rm I}}\{W_n(e')\}\right)\geq 1- 2 n^2\exp(-\frac{1}{576}n)\label{ineq:min_weight}.
\end{equation}
Thus a new leaf is almost certain to be added to pendant edges; moreover, as noted above, each pendant edge has equal probability of being attached to.

\subsection{New leaves attach rarely to interior edges}

\begin{theorem}\label{thm:more_than_m}
Suppose $T_m\in \mathcal{T}(m)$, let $\mathcal{E}_m$ be the event that all leaf attachments under RTP beyond $T_m$ are to uniformly selected pendant edges. Then, for constants $a, b>0$:
$$\mathbb{P}(\mathcal{E}_m)\geq 1- ae^{-bm}.$$ 
\end{theorem}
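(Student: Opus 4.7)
The plan is to combine the per-step bound in (\ref{ineq:min_weight}) with a union bound over all puzzling steps $n \geq m$, and then show that the resulting tail series decays exponentially in $m$.

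For each $n\geq m$, let $\mathcal{A}_n$ denote the event that, at the step in which the $(n+1)$-st leaf is attached to the current tree $T_n$, the minimum edge weight is attained uniquely within the set $E_n^{\rm P}$ of pendant edges. On $\mathcal{A}_n$ the new leaf is certainly attached to a pendant edge, and moreover, because the random quartet topologies drawn during this step are i.i.d.\ uniform over the three possibilities, the joint distribution of $(W_n(e))_{e\in E_n^{\rm P}}$ is exchangeable under relabelling of leaves. Hence, conditional on $\mathcal{A}_n$, each pendant edge is equally likely to carry the unique minimum, which is exactly the ``uniformly selected pendant edge'' property, so $\mathcal{E}_m\supseteq \bigcap_{n\geq m}\mathcal{A}_n$. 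A minor variant of inequality (\ref{ineq:min_weight}), replacing $\leq$ by $<$ at the cost of an extra term bounding the probability of a tie between the pendant and interior minima (which is controlled by the same sub-Gaussian estimates used to prove (\ref{ineq:min_weight}) in the Appendix, since by Lemma~\ref{centrallem} the means of interior-edge weights exceed those of pendant-edge weights by order $n^{2}$ while the standard deviations are only order $n$), gives
$$\mathbb{P}(\mathcal{A}_n^c) \leq 2n^2\exp(-n/576),$$
uniformly in the tree $T_n$.

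Applying the union bound to the complement and then estimating the tail, fix any $b\in(0,1/576)$; since $x^{2}e^{-(1/576-b)x}$ is bounded on $[0,\infty)$ by some constant $C$, one has $2n^{2}e^{-n/576}\leq 2Ce^{-bn}$, and therefore
$$\mathbb{P}(\mathcal{E}_m^c) \leq \sum_{n=m}^{\infty}\mathbb{P}(\mathcal{A}_n^c) \leq \sum_{n=m}^{\infty} 2C e^{-bn} = \frac{2C}{1-e^{-b}}\,e^{-bm} = a\,e^{-bm},$$
with $a = 2C/(1-e^{-b})$. This yields $\mathbb{P}(\mathcal{E}_m)\geq 1-ae^{-bm}$, as required.

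The main obstacle is mostly bookkeeping: the exponential decay rate is inherited almost directly from (\ref{ineq:min_weight}), and the union bound only introduces a polynomial factor $n^{2}$, which is easily absorbed by weakening the rate from $1/576$ to any $b<1/576$. The only genuinely delicate point is the exchangeability argument used to justify the ``uniformly selected'' part of $\mathcal{E}_m$; I expect this to follow cleanly from the symmetry of the uniform random quartet choices, as already invoked in the discussion preceding the theorem.
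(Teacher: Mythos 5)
Your proposal is correct and follows essentially the same route as the paper: a union bound over the attachment steps $n\geq m$, the per-step estimate \eqref{ineq:min_weight} giving $2n^{2}e^{-n/576}$, and a geometric-series bound on the tail (the paper absorbs the $n^{2}$ factor via its Inequality \eqref{ieq:m_is_big} by halving the rate, whereas you weaken the rate to an arbitrary $b<1/576$ --- a cosmetic difference). Your extra care about ties and the exchangeability justification of the ``uniformly selected'' clause is a welcome refinement of a point the paper dispatches with a one-line appeal to Lemma~\ref{centrallem}, but it does not change the argument's structure.
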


\begin{proof}
Let $B_k$ be the event that $(k+1)-$st leaf  is not attached to any leaf edge of $T_k$. Then we have $1-\mathbb{P}(\mathcal{E}_m)=\mathbb{P}\left(\bigcup^\infty_{k=m} B_k\right)$. By Boole's inequality, we have $\mathbb{P}\left(\bigcup^\infty_{k=m} B_k\right)\leq \sum^\infty_{k=m} \mathbb{P}\left( B_k\right)$.
By Inequality \eqref{ineq:min_weight},
$\mathbb{P}\left( B_k\right)\leq 2k^2\exp(-\frac{1}{576}k)$.
We now use the following general inequality, the proof of which is given in the Appendix. If $\displaystyle Q_m=\sum^{\infty}_{k=m} k^2\exp(-c k)$, where $c\geq\frac{4\log{k}}{k}$
and $k> 1$, then for $m\geq m_0$: 
\begin{equation}
\displaystyle Q_m\leq \frac{\exp(-cm_0/2)}{1-\exp(-c/2)}.\label{ieq:m_is_big}
\end{equation}
Thus,
\begin{align*}
1-\mathbb{P}(\mathcal{E}_m) &\leq \sum^\infty_{k=m}2 k^2\exp(-\frac{1}{576}k) \\ &\leq \frac{2}{1-\exp(-\frac{1}{576}\times\frac{1}{2})}\exp(-\frac{1}{576}\times\frac{1}{2}m). 
\end{align*}
Rearranging this inequality establishes the inequality in the  theorem.  The uniformity follows by Lemma~\ref{centrallem}.
\end{proof}

\subsection{The mean and variance of the number of cherries in the RTP tree}
Table 3 of Vinh et~al. (2011) reveals that the mean and variance of the number of cherries on trees generated under the RTP process and under YH process are similar. In order to provide a formal proof that they converge to the same limiting distribution, we need to introduce the {\it Extended Poly\'{a} urn model} (EPU).
\subsubsection{Extended Poly\'{a} urn model}
Consider the following extended  Poly\'{a} urn (EPU) model: at time $t=0$, there are $b$ blue balls and $r$ red balls in an urn, where $b\geq 0$ and $r\geq 0$. At each discrete time step, one ball is picked at random from the urn. If the ball is blue,  $c$  additional blue balls and $d$ red balls will be placed; if the picked ball is red, $e$ additional  blue balls and $f$ red balls will be placed. The values $c,d,e,f$ can also take negative values, in which case, instead of placing new balls in the urn, the number of balls of the appropriate colour will be withdrawn. We use $b_n$ to denote the number of blue balls after the $n$th draw, and $S_n$ is the total number of balls. The following matrix describes this process:
$$A=\left[\begin{array}{cc}
   c&d\\e&f\\
  \end{array}\right].
$$
We require that $A$ has positive and equal row sums, as well as one real positive principal eigenvalue $\lambda$.  Let $\left[\begin{array}{c}v_1\\v_2 \end{array}\right]$ be the normalized eigenvector associated with $\lambda$. Then, under these conditions, a classic result 
%\citep{Mahmoud2008,Smythe1996} 
states that, as $n\rightarrow\infty$, 
%$\frac{S_n}{n}\stackrel{p}{\rightarrow}\lambda$, and so $b_n/n\stackrel{p}{\rightarrow}\lambda v_1$; moreover, 
$\frac{b_n-\lambda v_1 n}{\sqrt{n}}\stackrel{\mathcal{D}}{\rightarrow} \mathcal{N}(0,\sigma^2)$ (Mahmoud 2008; Bagchi and Pal 1985),
where $\stackrel{\mathcal{D}}{\rightarrow}$ denotes convergence in distribution. 
Crucially, the initial values of $b$ and $r$ do not play any significant roles in this limiting normal distribution (or of its mean and variance).  

\subsubsection{EPU and attaching new edges only to pendant edges}
We relate the Yule process to the EPU model as follows: consider the set of cherry edges as a collection of blue balls, and the non-cherry edges as a collection of red balls. When a new edge is attached to a pendant edge, if it is attached to a cherry edge, the number of cherry edges remain the same, but the number of non-cherry edges increases by one. If a new edge is added to an non-cherry edge, then the non-cherry edge becomes a cherry edge, and the new edge is also a cherry edge. Thus, the generating matrix is:
$$A=\left[\begin{array}{cc}
   0&1\\2&-1\\
  \end{array}\right].
$$
Notice that $A$ has row sum equal to $1$ and $A$ has one real positive eigenvalue $\lambda$, as required.

Let $C_n$ be the number of cherries in a YH tree. Then as $n$ tends  to infinity,  $$Z_n := (C_n-n/3)/\sqrt{2n/45}$$ converges in
distribution to a standard normal distribution (i.e.  $Z_n \xrightarrow{\mathcal{D}}  N(0,1)$), by  Corollary 3 of (McKenzie and Steel 2000).
We now show that the same holds for the distribution of cherries in an RTP tree. 

\begin{theorem}\label{thm:z_star_converge2_z}
Let $C^*_n$ be the number of cherries in an RTP tree, and let  $Z^*_n = (C^*_n-n/3)/\sqrt{2n/45}$. 
Then $Z^*_n \xrightarrow{\mathcal{D}}  N(0,1)$.
\end{theorem}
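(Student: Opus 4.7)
My plan is to combine Theorem \ref{thm:more_than_m} with the EPU central limit theorem recalled in Section 3.3. By Theorem \ref{thm:more_than_m}, for each $\epsilon>0$ I can fix an integer $m$ (supplied by the explicit bound $ae^{-bm}$) such that $\mathbb{P}(\mathcal{E}_m)\geq 1-\epsilon$. On the event $\mathcal{E}_m$, by definition, the RTP evolution from step $m$ onward is identical to a YH process started from the tree $T^*_m$ that the RTP has produced by time $m$: each new leaf is attached to a uniformly selected pendant edge. Consequently, conditional on $\mathcal{E}_m$ and on $T^*_m$, the pair (number of cherry edges, number of non-cherry pendant edges) evolves exactly as the EPU with the $2\times 2$ matrix displayed in Section 3.3, starting only from the values $(b,r)$ encoded by $T^*_m$.

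The second step is to invoke the EPU limit theorem. Since its conclusion does not depend on the initial urn composition, $(C^*_n-n/3)/\sqrt{2n/45}$ converges in distribution to $N(0,1)$ on $\mathcal{E}_m\cap\{T^*_m=\tau\}$ for each of the finitely many $m$-leaf tree shapes $\tau$ that RTP can produce; averaging over $\tau$ gives $Z^*_n\mid \mathcal{E}_m \xrightarrow{\mathcal{D}} N(0,1)$ for every fixed $m$. To pass from conditional to unconditional convergence, for any continuity point $z$ of the standard normal distribution function $\Phi$ I would use the bound
$$|\mathbb{P}(Z^*_n\leq z)-\Phi(z)|\leq |\mathbb{P}(Z^*_n\leq z\mid \mathcal{E}_m)-\Phi(z)|+\mathbb{P}(\mathcal{E}_m^c),$$
letting $n\to\infty$ with $m$ fixed to kill the first term, and then letting $m\to\infty$ to kill the second by Theorem \ref{thm:more_than_m}.

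The step I expect to need the most care is the assertion that, conditional on $\mathcal{E}_m$, the RTP really is a YH process from time $m$ onward. Two subtleties arise. First, conditioning on ``the minimum-weight edge at each subsequent puzzling step happens to be a pendant edge'' could, a priori, bias the choice among pendant edges; however, by Lemma \ref{centrallem} every pendant edge carries an identically distributed $\mathrm{Bin}((n-1)(n-2)/2,\,2/3)$ weight, so by symmetry of the joint distribution each pendant edge is equally likely to attain the minimum (combined with the uniform random tie-breaking already assumed in the puzzling step). Second, one needs the quartets used at successive puzzling steps to be resampled independently so that Lemma \ref{centrallem} can be applied afresh at each $n$; this is the standard independence assumption built into the RTP model. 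Once both points are granted, the argument reduces to plugging the common YH/EPU limit into the conditioning framework above.
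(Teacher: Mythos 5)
Your proposal is correct and follows essentially the same route as the paper: condition on the event $\mathcal{E}_m$ from Theorem~\ref{thm:more_than_m}, use the initial-condition-insensitivity of the EPU central limit theorem to get the conditional convergence $Z^*_n\mid\mathcal{E}_m \xrightarrow{\mathcal{D}} N(0,1)$, and remove the conditioning via the triangle inequality and $\mathbb{P}(\mathcal{E}_m^c)\to 0$. Your extra care about whether conditioning on $\mathcal{E}_m$ biases the choice among pendant edges (resolved by the symmetry in Lemma~\ref{centrallem}) is a point the paper treats only implicitly, but it does not change the argument.
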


\begin{proof}
We need to show that for any $\epsilon>0$, and for all sufficiently large value of $n$ and all positive real $x$,
\begin{equation}
 |\mathbb{P}(Z^*_n<x ) - \mathbb{P}(Z<x)|\leq \epsilon.\label{eqn:z_star_converge2_z}
\end{equation}
where $Z$ is a standard normal random variable.

As before, let $\mathcal{E}_m$ be the event that after $m$ leaves have been attached to the starting tree by RTP, all further additions are to pendant edges, and let $\mathcal{E}_m^c$ be the complement of $\mathcal{E}_m$. For $n>m$, by the law of total probability, we have:
\begin{equation}
\mathbb{P}(Z^*_n<  x) = \mathbb{P}(Z_n^*<x|\mathcal{E}_m)\mathbb{P}(\mathcal{E}_m) + \mathbb{P}(Z_n^*<x|\mathcal{E}_m^c)\mathbb{P}(\mathcal{E}_m^c).\label{eqn:z_star}
\end{equation}
If we now subtract $\mathbb{P}(Z^*_n<x|\mathcal{E}_m)$ from both side of Equation \eqref{eqn:z_star}, we obtain:
\begin{equation}
\begin{array}{rl}
&\mathbb{P}(Z^*_n<x) - \mathbb{P}(Z^*_n<x|\mathcal{E}_m)\\
=&\mathbb{P}(Z_n^*<x|\mathcal{E}_m)(\mathbb{P}(\mathcal{E}_m)-1) + \mathbb{P}(Z_n^*<x|\mathcal{E}_m^c)\mathbb{P}(\mathcal{E}_m^c).
\end{array}\label{eq:step2}
\end{equation}
By the triangle inequality ($|a+b|\leq |a|+|b|$) we have:
\begin{equation}
\begin{array}{rl}
&|\mathbb{P}(Z_n^*<x|\mathcal{E}_m)(\mathbb{P}(\mathcal{E}_m)-1)+ \mathbb{P}(Z_n^*<x|\mathcal{E}_m^c)\mathbb{P}(\mathcal{E}_m^c)| \\
\leq  & |\mathbb{P}(Z_n^*<x|\mathcal{E}_m)(\mathbb{P}(\mathcal{E}_m)-1)|+ | \mathbb{P}(Z_n^*<x|\mathcal{E}_m^c)\mathbb{P}(\mathcal{E}_m^c)|.
\end{array}
\label{ieq:tri_ineq}
\end{equation}
Combining Equation \eqref{eq:step2} and Inequality \eqref{ieq:tri_ineq} gives the following:
\begin{equation}
\begin{array}{rl}
&|\mathbb{P}(Z^*_n<x) - \mathbb{P}(Z^*_n<x|\mathcal{E}_m)|\\
\leq & |\mathbb{P}(Z_n^*<x|\mathcal{E}_m)(\mathbb{P}(\mathcal{E}_m)-1)|+ |\mathbb{P}(Z_n^*<x|\mathcal{E}_m^c)\mathbb{P}(\mathcal{E}_m^c)| ,\\
\leq &|\mathbb{P}(Z_n^*<x|\mathcal{E}_m)||(\mathbb{P}(\mathcal{E}_m)-1)|+ |\mathbb{P}(Z_n^*<x|\mathcal{E}_m^c)||\mathbb{P}(\mathcal{E}_m^c)| .
\end{array}\label{ieq:step2}
\end{equation}

Theorem \ref{thm:more_than_m} tells us that $\mathbb{P}(\mathcal{E}_m)\geq  1-ae^{-bm}$, which  tends to $1$ as $m$ grows.
Now, since $\mathbb{P}(\mathcal{E}_m^c)\rightarrow0$ as $m$ tends to infinity, we can select a sufficiently large value of $m$ that $\mathbb{P}(\mathcal{E}_m^c)\leq \epsilon/4$ and $\mathbb{P}(\mathcal{E}_m)\geq  1-\epsilon/4$. Thus, $\mathbb{P}(\mathcal{E}_m)-1 \geq  -\epsilon/4$, and $|\mathbb{P}(\mathcal{E}_m)-1 |\leq  \epsilon/4$. Since $0\leq \mathbb{P}(Z_n^*<x|\mathcal{E}_m)$, $\mathbb{P}(Z_n^*<x|\mathcal{E}_m^c)\leq 1$, Inequality \eqref{ieq:step2} gives:
\begin{equation}
 |\mathbb{P}(Z^*_n<x) - \mathbb{P}(Z^*_n<x|\mathcal{E}_m)|\leq \epsilon/4+\epsilon/4=\epsilon/2 \label{eqn:z_star_n},
\end{equation}
for all sufficiently large $m$, and all $n \geq m$ and $x>0$.

Now we consider the sequence of $Z^*_n$ conditional on $\mathcal{E}_m$. By conditioning on this event all the new leaves are to uniformly selected pendant edges. 
Because the EPU argument that established the convergence of the sequence $Z_n$ (the normalization of the number of cherries in a YH tree) does not depend on the initial number of cherries for any $\epsilon>0$, and every $m$,  there exists an integer $n_0$ so that for all $n \geq n_0$, and $x>0$: 
\begin{equation}
|\mathbb{P}(Z^*_n<x|\mathcal{E}_m) - \mathbb{P}(Z_n<x)|\leq \epsilon/2. \label{eqn:z_star_z_n}
\end{equation}
Then, by the triangle inequality ($|a+b|\leq |a|+|b|$), if we  add Inequalities \eqref{eqn:z_star_n} and \eqref{eqn:z_star_z_n}, we have
$$|\mathbb{P}(Z^*_n<x) - \mathbb{P}(Z_n<x)|\leq \epsilon,
$$
and since $Z_n$ converges in distribution to a standard normal, this establishes \eqref{eqn:z_star_converge2_z}.

\end{proof}
Theorem \ref{thm:z_star_converge2_z} shows that the number of cherries on the RTP trees has a limiting normal distribution with the same asymptotic mean and variance as for the YH distribution.

We have also shown that, from some point forward, new leaves will always be added to pendant edges, which verifies the weak conjecture.
While these two results may be regarded as providing some weak evidence in favour of the strong conjecture, they do not constitute any formal justification of it.
In the next section, we will provide an analysis that suggests that the variational distance between the two distributions remains bounded away from zero as $n$ grows, and this makes these two process distinct in the limit.

\section{Is RTP the same as YH?}

Consider the following scenario where we perform the YH process on some starting tree with more than three leaves, where $v$ is one of the interior nodes. At node $v$, the graph is divided into three subtrees (see Fig.~\ref{fig:centroid}). We let $L_i$, $(i=1,2,3)$ denote the leaf sets of these subtrees, and let $l_i=|L_i|$, $(i=1,2,3)$ denote the number of leaves in the sets. We normalize the $l_i$ values by the total number of leaves $n$. Clearly, the sequence of $l_i/n$ values change, as new leaves are gradually added to the whole tree.

\subsection{Poly\'{a} urns and the centroid of a tree}
Adding new leaves on to the tree under the YH process ensures that each new leaf is always added into one of the leaf sets $L_i$, $(i=1,2,3)$. The probability that $l_i$ increases by one is the relative proportion of the number of leaves of the subtree in relation to the number of leaves in the full tree. This is similar to the Poly\'{a} urn problem (Karr 1993) involving balls of three different colours.

Suppose that one ball is picked randomly at each step, and replaced along with another ball of the same colour into the urn.
Let $F_n^i$ be the relative frequency of the $i$th colour ball when $n$ balls are present, and $\mathbf{F}_n=(F_n^1,F_n^2,F_n^3)$. Then $\mathbf{F}_n$ converges (as $n \rightarrow \infty$) to a  Dirichlet distribution (Kotz et~al. 2000) with the parameter vector $\mathbf{F}_{n_0}$, where $n_0$ is the total initial number of balls.  Different initial values in the urn produce different distributions when $n$ balls are present in the urn, and this difference in distributions does not converge to zero as $n$  grows.  This result suggests that the YH process on different initial $X$-trees may well lead to different distributions of the resulting trees. However, if the final tree shape is the only information we are given, then it will be impossible to identify the position of the original vertex $v$ in the final tree with certainty. Thus the frequencies $\mathbf{F}_n$ cannot be clearly measured from the final tree alone. However, we can partly ameliorate this problem by considering a particular vertex that we can 
easily identify in the final tree, namely its centroid (Jordan 1869; Mitchell 1978).

\begin{figure}[ht]
\centering
\includegraphics[width=5cm]{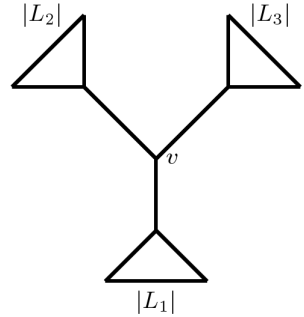}
\caption{Centroid of a tree}\label{fig:centroid}
\end{figure}

\begin{define}
A vertex $v$ of a tree $T=(V,E)$ is a {\it centroid} if each component of the disconnected graph $T\backslash v$ has, at most $(1/2)|V|$ vertices.%that connect $V$.
\end{define}
A well known property of centroids states that a tree has either a single centroid or two adjacent centroids, in which case $|V|$ is even (Kang and Ault 1975). To keep the problem simple, we only consider trees with a single centroid. However, because $T$ is a binary tree, $|V|$ is always even, and so this does not guarantee a unique centroid. Fortunately, the following lemma shows that a binary tree with odd number of leaves always has a unique centroid.

\begin{lemma}
Let $T$ be an unrooted binary $X_n$-tree. Then:
\begin{enumerate}
\item A vertex $v$ of $T$ is a centroid of $T$ if and only if $v$ satisfies $l_1,l_2,l_3\leq \frac{n}{2}$, where $l_i$ are the number of leaves of the three subtrees of $T\backslash v$.
\item If $n$ is odd, then $T$ has a unique centroid.
\end{enumerate}
\end{lemma}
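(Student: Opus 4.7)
The plan is to reduce both parts to simple counts by exploiting the rigidity of binary trees (all interior vertices of degree three). For part 1, I first translate the vertex-count definition of centroid into a leaf-count statement. The unrooted binary $X_n$-tree $T$ has $n$ leaves and $n-2$ interior vertices, so $|V|=2n-2$. Fix an interior vertex $v$ and one of its three incident subtrees; if that subtree contains $l_i$ leaves then I would argue (either by a degree-sum calculation or by a short induction on $l_i$) that it contains exactly $2l_i-1$ vertices in total. Here the boundary case $l_i=1$ is the neighbour of $v$ being a leaf itself (one vertex, matching $2\cdot 1-1$), and for $l_i\ge 2$ the subtree has $l_i$ leaves of degree $1$, one degree-$2$ ``root'' (the former neighbour of $v$) and $l_i-2$ interior degree-$3$ vertices, summing to $2l_i-1$. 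As a sanity check, $(2l_1-1)+(2l_2-1)+(2l_3-1)=2n-3=|V|-1$, accounting for the removed $v$.

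Once this count is in hand, the centroid condition ``each component of $T\setminus v$ has at most $|V|/2$ vertices'' becomes $2l_i-1\le n-1$ for $i=1,2,3$, i.e.\ $l_i\le n/2$. Both directions are immediate from the equivalence $2l_i-1\le n-1 \iff l_i\le n/2$, proving part 1.

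For part 2, I would invoke the standard fact cited just before the lemma: every tree has either a unique centroid or exactly two centroids, which in the latter case are adjacent. So it suffices to rule out the two-centroid case when $n$ is odd. Suppose for contradiction that $u$ and $v$ are two adjacent centroids. Deleting the edge $\{u,v\}$ splits $T$ into two subtrees $T_u\ni u$ and $T_v\ni v$ with $a$ and $b$ leaves respectively, where $a+b=n$. Now $T\setminus v$ has $T_u$ as one of its three components, so part 1 applied to $v$ yields $a\le n/2$; symmetrically, applied to $u$, it yields $b\le n/2$. Combined with $a+b=n$, this forces $a=b=n/2$, which is impossible when $n$ is odd. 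Hence $T$ has a unique centroid.

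The only step that takes any care is the vertex-count $2l_i-1$ for each pruned subtree, particularly handling the $l_i=1$ boundary case cleanly; everything else is a direct substitution into the definition and an application of the cited classical characterisation of centroids.
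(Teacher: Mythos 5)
Your proof of part 1 is essentially identical to the paper's: both translate the vertex-count definition into leaf counts via $|V_i|=2l_i-1$ and $|V|=2n-2$, so that the centroid condition $|V_i|\leq \frac{1}{2}|V|$ becomes $2l_i-1\leq n-1$, i.e.\ $l_i\leq n/2$. (Your explicit degree-sum justification of $|V_i|=2l_i-1$, including the $l_i=1$ boundary case, is more careful than the paper, which simply asserts the count.) For part 2, however, you take a genuinely different route. You invoke the classical Jordan/Kang--Ault fact (cited in the paper just before the lemma) that a tree has at most two centroids and that two centroids must be adjacent, and then kill the two-centroid case by a parity argument: deleting the edge between two adjacent centroids $u,v$ splits the leaves as $a+b=n$ with $a,b\leq n/2$ by part 1, forcing $a=b=n/2$, impossible for odd $n$. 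The paper instead argues directly and self-containedly: fixing a centroid $v$ with subtree leaf counts $l_1\leq l_2\leq l_3\leq n/2$, oddness of $n$ gives the strict bound $l_i<n/2$ for all $i$, and then any other vertex $d$ lies inside some subtree $T_i$, so the component of $T\setminus d$ containing $v$ has at least $n-l_i>n/2$ leaves, disqualifying $d$. Your version is shorter and cleaner but leans on the cited external result as a black box (legitimately, since the paper states it); the paper's version needs no such input and in effect re-derives the relevant special case of that classical fact. Both arguments are correct; one small point worth making explicit in yours is that the two putative centroids are interior vertices (a leaf cannot be a centroid for $n\geq 3$), so part 1 really does apply to both $u$ and $v$.
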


\begin{proof}
\begin{enumerate}
\item[(1)]  Suppose that $v$ is an interior vertex of $T$. Consider the vertex sets $V_1$, $V_2$ and $V_3$ of the connected components of $T \backslash v$. Let $l_i$ be the number of leaves in $V_i$. Considering the rooted binary tree on $V_i$, we have:
\begin{equation}
|V_i|=2l_i-1 .\label{eqn:v_eq_2l_minus_1}
\end{equation}
Also, since $T$ is an unrooted binary tree, we have:
\begin{equation}
|V|=2n-2 . \label{eqn:v_eq_2n_minus_2}
\end{equation}

Thus, $|V_i| \leq \frac{1}{2}|V|$ if and only if $2l_i-1 \leq \frac{1}{2}(2n-2)$ and this holds precisely if $l_i \leq n/2$.  Thus, the condition for $v$ to be a centroid (namely that
$|V_i| \leq \frac{1}{2}|V|$ for $i=1,2,3$) is precisely the same as that stated in the lemma.

\item[(2)]  Suppose $v$ is a centroid of $T$. At $v$, we let $L_i$, ($i=1,2,3$) denote the leaf set of the subtrees $T_i$ and let $l_i$ denote the size of these leaf sets, ordered so that $l_j\leq l_3\leq \frac{|X|}{2}$, ($j=1,2$).  Since $n$ is odd, we have $l_3 < \frac{n}{2}$.

Suppose another centroid $d$ exists. We use $L_i'$ to denote the complement of $L_i$. Then there is a subtree $H$ of $T$ rooted at $d$, with leaf set $L_H$, where $L_H \supseteq G'$, and $G'\in \{L_1',L_2',L_3'\}$. Since $l_j\leq l_3< \frac{n}{2}$, where $j\in \{1,2\}$, we then have $|L_H|\geq |G'|> \frac{n}{2}$. Therefore, $d$ cannot be a centroid.

\end{enumerate}

\end{proof} 
\begin{figure}[ht]
\centering
\includegraphics[width=8cm]{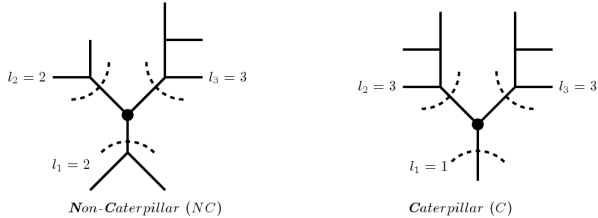}
\caption{The two tree shapes for binary trees on seven leaves}\label{fig:7tax}
\end{figure}

We now relate the centroid back to the Poly\'{a} urn problem. First notice that tree shapes only start to differentiate when there are more than five leaves. Therefore, in the following scenario, we perform the YH process from initial trees with seven leaves. Suppose that a tree $X$ is either the non-caterpillar (NC) or caterpillar (C) tree shown in Fig.~\ref{fig:7tax}.
We will use  $X$ as the initial tree to construct some tree $t_n$. At the centroid of $t_n$ when $n=7$ the sequences of  $l_i/n$ are $(2/7,2/7,3/7)$ and $(1/7,3/7,3/7)$ for $t_7=NC$ and $t_7=C$ respectively. Now, let us only consider the number of leaves $l_1$ in the smallest subtree of $t_n$ for all odd values of $n \geq 7$ (henceforth all values of $n$ in this section are odd to guarantee a unique centroids, and limits as $n$ tends to infinity are also over just the odd values of $n$). We define the ratio of $l_1$ and of number of leaves $n$ as $\pi_n^X =\frac{l_1}{n}$.  For $\gamma \in (0,1)$, let $\Pi^X$ be the limiting probability of the event $ \pi_n^X \geq\gamma$. In other words, $\displaystyle \Pi^X=\lim_{n\rightarrow \infty} \mathbb{P}(\pi_n^X \geq \gamma)$. To test the null hypothesis that $ \Pi^{NC} =\Pi^C$,
we investigate the ratio $\pi_n^X$ under the YH process. An additional 2000 leaves are attached to the starting trees $NC$ and $C$ under the YH process with 1000 replicates each case.
Using the initial tree $NC$ or $C$, we found that the probability that $\pi_n^X$ is greater than $\gamma=0.19$ does not appear to be converging for the two choices of X ($NC$ or $C$) (see Fig.~\ref{fig:7yuleprob}).
Fig.~\ref{fig:7yuleprob} indicates the $95\%$ confidence interval of proportions of the event for which  $\pi_n^X \geq  0.19$, which suggests the following strict inequality:
% \begin{conj}
\begin{equation}
 \Pi^{NC}>\Pi^C.\label{ineq:piAB}
\end{equation}
\begin{figure}[ht]
\begin{center}
\includegraphics[width=8cm]{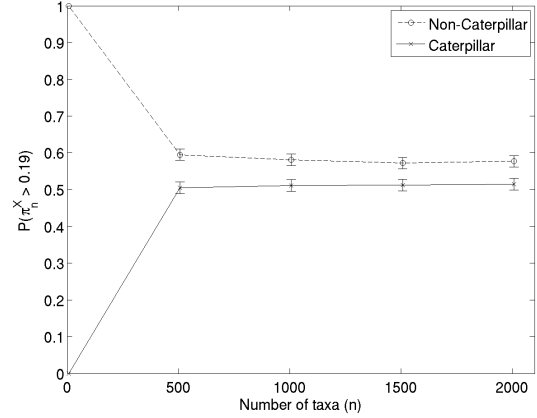}
\caption{Empirical probabilities and the $95\%$ confidence interval proportion of the event that $\pi_n^X \geq 0.19$. The dashed line is for the initial tree of the non-caterpillar seven-taxa tree; and solid line is for the caterpillar seven-taxa tree.}
\label{fig:7yuleprob}
\end{center}
\end{figure}

\subsection{A modified RTP process}
To provide evidence that the RTP and the YH processes are not exactly the same, we define a new process {\bf RTP$'$},  which is equivalent to the RTP process up to $n=7$.  From this point  forward it proceeds according to the YH process.
Therefore, the initial probabilities of constructing $X_n$-trees from $NC$ and $C$ under the RTP$'$ process are different from the YH process.
We use the probabilities of the starting tree $NC$ and $C$ under the RTP process as
the probabilities under the RTP$'$. Vinh et~al. (2011) estimated by simulations that the probabilities for the seven-taxa non-caterpillar tree is 0.4607 under the RTP process and 0.4667 under the YH process, which gives us the following inequality:
\begin{equation}
\mathbb{P}_{YH}(t_7=NC)-\mathbb{P}_{RTP'}(t_7=NC)>0\label{ineq:vinh}.
\end{equation}

\begin{theorem}
\label{thm33}
If \eqref{ineq:piAB} holds then $$\displaystyle\lim_{n\rightarrow\infty}d_{\rm VAR}(\mathbb{P}_{\rm RTP'}(t_n),\mathbb{P}_{\rm YH}(t_n))\neq0.$$
\end{theorem}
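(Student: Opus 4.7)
The plan is to exhibit a single event $A_n$ on $X_n$-tree shapes whose probabilities under the two processes stay separated in the limit, so that the bound $d_{\rm VAR}(\mathbb{P}_{\rm RTP'}(t_n),\mathbb{P}_{\rm YH}(t_n))\geq |\mathbb{P}_{\rm YH}(A_n)-\mathbb{P}_{\rm RTP'}(A_n)|$ is bounded below by a positive constant along a subsequence. Following the set-up of the section, restrict to odd $n$ so the centroid of $t_n$ is unique, and take
\[
A_n := \{t_n : \pi_n(t_n) \geq \gamma\},
\]
where $\pi_n(t_n)=l_1/n$ is the ratio of leaves in the smallest of the three subtrees at the centroid of $t_n$, and $\gamma$ is the threshold from the preceding discussion.

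The key structural observation is that RTP$'$ is defined to coincide with RTP up to $n=7$ and to follow YH from then on, so conditional on the seven-leaf shape $t_7 = X$, the distribution of $t_n$ for $n\geq 7$ is identical under the RTP$'$ and YH processes. Since there are only two seven-taxon tree shapes, the law of total probability yields, for $\ast\in \{{\rm YH},{\rm RTP'}\}$,
\[
\mathbb{P}_{\ast}(A_n) = \mathbb{P}_{\ast}(t_7=NC)\,\mathbb{P}_{\rm YH}(A_n\mid t_7=NC) + \mathbb{P}_{\ast}(t_7=C)\,\mathbb{P}_{\rm YH}(A_n\mid t_7=C).
\]
Subtracting and using $\mathbb{P}_{\rm YH}(t_7=C)-\mathbb{P}_{\rm RTP'}(t_7=C)=-\bigl(\mathbb{P}_{\rm YH}(t_7=NC)-\mathbb{P}_{\rm RTP'}(t_7=NC)\bigr)$ (since both distributions on $\{NC,C\}$ sum to $1$), the difference collapses into a product:
\[
\mathbb{P}_{\rm YH}(A_n) - \mathbb{P}_{\rm RTP'}(A_n) = \bigl[\mathbb{P}_{\rm YH}(t_7=NC)-\mathbb{P}_{\rm RTP'}(t_7=NC)\bigr]\cdot\bigl[\mathbb{P}_{\rm YH}(A_n\mid t_7=NC)-\mathbb{P}_{\rm YH}(A_n\mid t_7=C)\bigr].
\]

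To finish, I would let $n\to\infty$ along odd $n$. The first bracket is a strictly positive constant by \eqref{ineq:vinh}. The second bracket tends to $\Pi^{NC}-\Pi^C$, by the very definition of $\Pi^X$ as the limit of $\mathbb{P}(\pi_n^X\geq\gamma)$ under YH started from the seven-taxon shape $X$; this limit is strictly positive by the hypothesis \eqref{ineq:piAB}. Hence the product converges to a strictly positive number, and so $d_{\rm VAR}(\mathbb{P}_{\rm RTP'}(t_n),\mathbb{P}_{\rm YH}(t_n))$ is bounded below by a positive constant along the odd subsequence, precluding a limit of zero.

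The argument is really an exercise in the law of total probability, and I do not anticipate a serious technical obstacle beyond being explicit that $\pi_n^X$ in the definition of $\Pi^X$ coincides with $\pi_n(t_n)$ under YH conditioned on $t_7=X$, so that $\mathbb{P}_{\rm YH}(A_n\mid t_7=X)\to\Pi^X$ is tautological. The real mathematical content is carried by the two inputs \eqref{ineq:piAB} and \eqref{ineq:vinh}, both empirically supported but not analytically proved here; the proof itself only packages these facts via the ``memory'' that the RTP$'$/YH coupling has about the initial seven-taxon shape.
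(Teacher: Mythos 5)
Your proposal is correct and follows essentially the same route as the paper's proof: the same event $\{\pi_n \geq \gamma\}$, the same law-of-total-probability decomposition over the two seven-taxon shapes, and the same collapse into the product of the two bracketed differences controlled by \eqref{ineq:vinh} and \eqref{ineq:piAB}. If anything, your version is slightly more careful than the paper's, since you keep the finite-$n$ conditional probabilities $\mathbb{P}_{\rm YH}(A_n\mid t_7=X)$ and only pass to $\Pi^{NC}-\Pi^C$ in the limit, whereas the paper writes the limiting quantities directly into its finite-$n$ identity \eqref{eqn:diff_yh_tpstar}.
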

\begin{proof}
Let $\mathcal{S}(n)$ be the set of unlabeled $X_n$-tree and let:
\begin{equation}
\label{deltaeq}
\delta':=\sum_{t_n\in \mathcal{S}(n)}|\mathbb{P}_{\rm YH}(t_n)-\mathbb{P}_{\rm RTP'}(t_n)|.
\end{equation}
Consider the event $\Sigma_n$ that $\pi_n^X \geq \gamma$. Then:
\begin{eqnarray}
\mathbb{P}_{\rm YH}(\Sigma_n)=\sum_{X\in\{NC,C\}}\mathbb{P}_{\rm YH}(\Sigma_n|t_7=X) \mathbb{P}_{\rm YH}(t_7=X)\label{eqn:yh}\\
\mathbb{P}_{\rm RTP'}(\Sigma_n)=\sum_{X\in\{NC,C\}}\mathbb{P}_{\rm RTP'}(\Sigma_n|t_7=X)\mathbb{P}_{\rm RTP'}(t_7=X) \label{eqn:tpstar}
\end{eqnarray}
If we now subtract Eqns. \eqref{eqn:tpstar} from \eqref{eqn:yh}, and substitute $\mathbb{P}_*(t_7=C)$ in \\$1-\mathbb{P}_*(t_7=NC)$, we have:
\begin{equation}
\begin{array}{rl}
&\mathbb{P}_{\rm YH}(\Sigma_n)-\mathbb{P}_{\rm RTP'}(\Sigma_n)\\
=&\left(\mathbb{P}_{\rm YH}(t_7=NC)-\mathbb{P}_{\rm RTP'}(t_7=NC)\right)(\Pi^{NC}-\Pi^C).
\end{array}\label{eqn:diff_yh_tpstar}
\end{equation}
Thus, if we apply inequalities \eqref{ineq:vinh} and \eqref{ineq:piAB} in Eqn. \eqref{eqn:diff_yh_tpstar}, we obtain $\mathbb{P}_{\rm YH}(\Sigma_n)-\mathbb{P}_{\rm RTP'}(\Sigma_n)>0$.  Consequently,  $\delta'>0$ in (\ref{deltaeq}), and so 
$\displaystyle\lim_{n\rightarrow\infty}d_{\rm VAR}(\mathbb{P}_{\rm RTP'}(t_n),\mathbb{P}_{\rm YH}(t_n))\neq0$, as claimed.
\end{proof}

It is important to be clear about what we have established: we have not formally shown that RTP does not converge to YH, nor even that RTP$'$ fails to converge to YH. Rather, we have provided evidence that a certain property of RTP$'$ holds, and if so, this implies (Theorem~\ref{thm33})  that RTP$'$ does not converge to YH.
Then, since RTP$'$ is a hybrid of YH and RPT,  this suggests that RPT does not either.

\section{Further discussion and concluding comments}

In phylogenetic studies, trees are inferred from DNA sequences using various methods. It is also pertinent to ask what sort of trees these methods would produce, given entirely random data.
This is one of the motivations of the study by Vinh et~al. (2011). In the following discussion, we use an $n$ by $k$ matrix $D$ to denote a sequence of $k$ independent characters on $n$ taxa. Note that all the characters have the same state space $S$. The term `random data' can refer to any one of the following three schemes:
\begin{enumerate}
\item[{\bf (R1)}.] State $x$ is assigned to taxon $i$ in character $j$ by an independent, identically distributed (i.i.d.) process with a probability $p_j(x)$, for $x\in S$.
\end{enumerate}
When the probabilities of state $x$ are the same for all characters (i.e. if $p_j(x)=p(x)$ for all $j$),  we obtain a stronger notion as follows:
\begin{enumerate}
\item[{\bf (R2)}.] For every entry of the matrix $D$, $D_{ij}$ is assigned to state $x$ with probability $p(x)$.
\end{enumerate}
If all states are equally likely (i.e. if $p(x)=1/|S|$), we arrive at an even stronger notion as follows:
\begin{enumerate}
\item[{\bf (R3)}.] For all entries of $D$, all states have equal probabilities.
\end{enumerate}

Vinh et~al. (2011) suggest that random data imply that quartet trees are equally likely and independent to each other, stating:
\begin{quote}
{\it
In our setting, we assume no phylogenetic information
in the data. This is equivalent to the assumption that each
of the three topologies for a quartet is equally likely and
that the tree topology for each quartet is independent of
the other quartets. \dots ~Hence, $3^{\binom{n}{4}}$ possible combinations
of quartet trees will serve as input to TP. }
\end{quote}
For any of the models (R1)--(R3), it certainly is true that random sequence data provide equal support for all three possible topologies of any four taxa. However, this does not necessarily imply that the inferred quartet trees are exactly independent. Rather than persue this question here, we will consider the behavour of TP under a model in which quartet trees are i.i.d. and uniform, as in Vinh et~al. (2011).

While the RTP process appears to converge close to the YH distribution, it is instructive to note that another tree reconstruction method, {\it maximum parsimony} (MP), when applied on random data, converges to a quite different distribution on trees.  Under model (R3) with two states MP converges to  the PDA (`proportional to distinguishable arrangements') model, which selects each unrooted binary tree with equal probability.   Let $B(n)$ be the set of unrooted binary trees on the leaf set $\{1,2,\ldots n\}$. For model (R3) with two states and $k$ independent characters, we use $\mathcal{T}_{MP}(D)$ to denote the MP tree on $D$ (if the MP tree for $D$ is not unique then select one MP tree uniformly at random).

\begin{theorem}\label{thm:mp}
Under random model {\bf (R3)} with two states:
\begin{enumerate}
\item\label{thm:mp1} The random tree $\mathcal{T}_{MP}(D)$ has a PDA distribution on $B(n)$; i.e. $$\mathbb{P}(\mathcal{T}_{MP}(D)=T)=\frac{1}{|B(n)|}.$$
\item\label{thm:mp2} For each fixed $n$, there is a unique MP tree for $D$ with probability converging to 1 as $k$ grows.
\end{enumerate}
\end{theorem}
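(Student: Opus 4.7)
For Part \ref{thm:mp1}, my plan is to start from the symmetries of the uniform law on $\{0,1\}^{n\times k}$. Under model (R3) with two states, the distribution of $D$ is invariant under row permutations by any $\sigma\in S_n$, under column permutations, and under independent swapping of the two states within each column (this last symmetry preserves Fitch scores because exchanging $0\leftrightarrow 1$ in a two-state character does not change its parsimony on any tree). Since uniform tiebreaking commutes with leaf relabeling, $\mathcal{T}_{MP}(\sigma D)=\sigma\cdot\mathcal{T}_{MP}(D)$, so the law of $\mathcal{T}_{MP}(D)$ is $S_n$-invariant and hence assigns equal probability to any two trees in the same shape class.

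To upgrade this to full uniformity over $B(n)$ one needs an additional symmetry that also exchanges trees of different shapes. My plan is to show that the joint distribution of the score vector $(\mathrm{MP}(T,D))_{T\in B(n)}$ is exchangeable in $T$; combined with uniform tiebreaking this forces the argmin to be uniform. Exchangeability would follow from producing, for every pair $T,T'\in B(n)$, a measure-preserving involution on $\{0,1\}^{n\times k}$ that swaps $p(T,\cdot)$ and $p(T',\cdot)$. A natural preliminary identity to aim for is that the marginal distribution of $p(T,\chi)$ for a single uniform two-state character $\chi$ is the same for every $T\in B(n)$ (which one can verify directly by enumerating characters according to the bipartition they induce, and is already non-obvious in small cases such as comparing the snowflake and caterpillar shapes on $n=6$). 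The main obstacle is constructing the bijection for trees in different shape classes: no element of $S_n$ sends one such tree to the other, so the required symmetry cannot be read off from a leaf permutation and must instead be assembled from the combinatorics of two-state Fitch parsimony.

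For Part \ref{thm:mp2}, the plan is an anti-concentration estimate combined with a union bound. Fix distinct $T,T'\in B(n)$ and set $X_j:=p(T,\chi_j)-p(T',\chi_j)$ for the $j$-th random character $\chi_j$. The $X_j$ are i.i.d., integer-valued, bounded by $n$, have mean zero (by the marginal identity invoked in the argument above), and are non-degenerate because distinct binary trees have distinct split systems, so some character must distinguish them. A local central limit theorem for integer-valued, non-degenerate lattice random walks then yields
\[
\mathbb{P}\!\left(\sum_{j=1}^{k} X_j = 0\right) = O\!\left(k^{-1/2}\right),
\]
so $T$ and $T'$ tie in total parsimony with probability $O(k^{-1/2})$. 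Summing this estimate over the $\binom{|B(n)|}{2}$ pairs of trees (a finite constant once $n$ is fixed) gives $\mathbb{P}(\mathcal{T}_{MP}(D)\text{ is not unique})=O(k^{-1/2})\to 0$ as $k\to\infty$, as required.
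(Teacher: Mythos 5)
Your treatment of Part 2 is sound and follows essentially the same route as the paper: for each pair of distinct trees bound the probability that their total parsimony scores tie, then take a union bound over the finitely many pairs. Your appeal to a local central limit theorem to get the $O(k^{-1/2})$ anti-concentration bound is in fact the correct formalization of the paper's looser appeal to ``the Central Limit Theorem'' (which does not by itself control point probabilities), and the non-degeneracy observation --- distinct binary trees induce distinct split systems, so some character separates them --- is exactly what that step needs.

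Part 1, however, is not a proof. You correctly observe that the $S_n$-symmetry of the model only yields equal probabilities for trees of the same shape, you propose to cross shape classes by establishing exchangeability of the joint score vector $(w(D,T))_{T\in B(n)}$ via measure-preserving involutions, and you then explicitly leave the construction of those involutions as ``the main obstacle.'' A plan that ends by naming its unresolved obstacle has a genuine gap. The ingredient the paper uses at this point is Theorem 7.1 of Steel (1993): the number of two-colourations of the leaves of a binary tree $T$ on $n$ leaves having parsimony score $l$ depends only on $n$ and $l$, and not otherwise on $T$. This is precisely the ``preliminary identity'' you flag as non-obvious; for general $n$ it is a cited theorem, not something to be checked by enumeration, and your proposal does not supply it. You are, incidentally, right to be uneasy about the remaining step --- identical marginal distributions of the scores do not by themselves force the (tie-broken) argmin to be uniform, so some joint symmetry beyond the leaf-permutation action is still needed, and the paper moves from the marginal identity to the conclusion rather quickly --- but since your proposal neither establishes the marginal identity in general nor closes the marginal-to-joint gap, Part 1 remains unproven as written.
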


\begin{proof}
\begin{enumerate}
\item Let $w(D,T)$, $T\in B(n)$, denote the parsimony score of $T$ on random data $D$. By Theorem 7.1 of Steel (1993), the number of ways to colour the leaves
 of a binary tree $T$ with $n$ leaves with using two colours, and so that the resulting colouration has parsimony score of $k$ for $T$ depends only on $n$ and not otherwise on the tree $T$. Hence, for all $T\in B(n)$, the probability $\mathbb{P}(w(D,T)=l)=f(l)$, is the same for all binary trees with a given number of leaves.  Therefore, each tree has the same probability of being an MP tree for $D$.

% \item Note that the parsimony score of $T$ on random data $D$ is a sum of $k$ i.i.d.  random variables (one for each column of $D$). Therefore, the expected value and variance of this parsimony score is $k\mathbb{E}(Y)$ and $k \Var(Y)$, respectively, where $Y$ is the parsimony score of a single column.
% For any two trees $T, T' \in B(n)$, the difference between the parsimony score of $T$ and $T'$ on $D$ has the expected value of zero, and a variance of $2k \Var(Y)$, which grows with $k$.
Let $E_k(T, T')$ be the event that $T$ and $T'$ have exactly the same parsimony score. By the Central Limit Theorem, the probability that the difference in scores is exactly 0 (i.e. $\mathbb{P}(E_k(T,T'))$) tends to zero as $k$ grows.

Let $E$ be the event that the maximum parsimony tree for $D$ is unique, and let $E^c$ be the complement, namely that there are at least two trees which have the same parisimony score for $D$. Note that $E^c$ is a subset of the union of the events $E_k(T,T')$ over all $T,T'$ (distinct). Therefore, we have: $$1-\mathbb{P}(E) \leq \mathbb{P}(\displaystyle{\bigcup_{T,T'}} E_k(T,T')) \leq \sum_{T,T'} \mathbb{P}(E_k(T,T')) \rightarrow 0, $$ as $k$ grows. Thus, $\mathbb{P}(E)\rightarrow 1$, as $k\rightarrow \infty$, as required.
\end{enumerate}
\end{proof}
Hence the MP tree on random data with two states converges to the PDA model.

In the PDA model, new leaf nodes are uniformly added onto any edges of the existing tree, whereas the Yule tree selects a pendant edge randomly, and adds a new node onto this pendant edge.
During the construction process, PDA, RTP and RTP$'$ can attach some new leaves onto interior edges.  For the PDA process, this has probability of almost $1/2$, and it is much less for RTP, as the number of leaves increases. In the case of RTP$'$, beyond seven leaves, all further leaves are inserted to a pendant edge, just as in  the YH model.

In conclusion, we have verified that the RTP process will eventually not add new leaves onto interior edges after some point, which makes the RTP process become more like the YH process. However, the distance between two distributions appears to remain bounded away from zero even when $n$ tends to infinity, which suggests that they are still two distinct tree construction methods.

\section{Acknowledgments}
We thank Marsden Fund for supporting this work. We also thank
David Aldous for suggesting we consider the stochastic properties of RTP$'$ trees relative to their centroids.

\section*{Appendix: Technical details}

\subsection*{\bf Proof of Lemma 2}
\begin{proof}
At edge $e$, suppose that $A$ and $B$ partition $X_n$, where $n-1\geq k\geq 1$, $|A|=k$ and $|B|=n-k$.
Let $\{a,b,c\}$ be a subset of $X_n$ of size three. Suppose that a new leaf $x$ is to be attached to $e$. Let $q$ be a split of $\{x,a,b,c\}$, and $q=xc|ab$, $xa|bc$, $xb|ac$ with equal probabilities.
Suppose, $a$ and $b$ are always on one side of $e$, we consider the following four cases,
$$\begin{cases}
\textrm{case I}&c\in B \textrm{ and } \{a,b\} \subseteq A;\\
\textrm{case II}&\{a,b,c \} \subseteq B;\\
\textrm{case III}&c\in A \textrm{ and }  \{a,b\} \subseteq B;\\
\textrm{case IV}&\{a,b,c \} \subseteq A.\\
\end{cases}$$

We use $Q_{\rm I}$, $Q_{\rm II}$, $Q_{\rm III}$ and $Q_{\rm IV}$ to denote the set of quartet trees on leaf set $\{x,a,b,c\}$ in the case I, II, III and IV respectively, and let $Q$ be the entire set of quartet trees for the  leaf set of $\{x,a,b,c\}$. Since the four cases are mutually exclusive, $Q_i$s partition $Q$, $i\in\{\rm I,II,III,IV \}$, and the sizes of $Q_i$s are $|Q_{\rm I}|=\binom{k}{2}\times\binom{n-k}{1}$, $|Q_{\rm II}|=\binom{n-k}{3}$, $|Q_{\rm III}|=\binom{n-k}{2}\times\binom{k}{1}$ and $|Q_{\rm IV}|=\binom{k}{3}$.

Let $w(e)$ be a random variable of the weight that is added to $e$ for a quartet tree of $\{x,a,b,c\}$. Consider $w(e)$ for each case $\{\rm I,II,III,IV \}$. Then we have:
\begin{itemize}
\item case I and III: $w(e)=\begin{cases}
 1,&w.p.~\frac{2}{3};\\ 0&w.p.~\frac{1}{3},
 \end{cases}$
\item case II and IV: $w(e)=0$.
\end{itemize}

Let $W_{i}(e)$, $i\in\{\rm I,II,III,IV \}$, be the sum of all the weights added to the edge $e$. $W_{\rm I}(e)$ is a binomial random variable with parameters $\binom{k}{2}\binom{n-k}{1}$ and $\frac{2}{3}$; $W_{\rm III}(e)$ is a binomial random variable with parameters $\binom{n-k}{2}\binom{k}{1}$ and $\frac{2}{3}$; $W_{\rm II}=W_{\rm IV}=0$. Let $W_n(e)$ be the sum of $W_i(e)$ values, so we have $W_n(e)=W_{\rm I}(e)+W_{\rm III}(e)$. Let $n_1=\binom{k}{2}\binom{n-k}{1}$, and $n_2=\binom{n-k}{2}\binom{k}{1}$, then $$n_1+n_2=\frac{k(n-k)(n-2)}{2},$$ and so $W_n(e)$ consists of this many independent trials with probability of success on each trial of $\frac{2}{3}$.
That is, $W_n(e)$ is a binomial random variable with parameters $\frac{k(n-k)(n-2)}{2}$ and $\frac{2}{3}$.

\end{proof}

\subsection*{\bf Proof of inequality (1)}

% \subsection{Relationship between the expected weights on pendent edges and interior edges}
Let $E_n^{\rm P}$ denote the set of pendent edges of current $X_n$-tree $T_n$, and $E_n^{\rm I}$ be the set of interior edges.

\begin{lemma}\label{lemma:E_diff}
For any $e''\in E_n^{\rm P}$ and any $e'\in E_n^{\rm I}$, the expected pendant edge total weight $W_n(e'')$ and the expected interior edge total weight $W_n(e')$, satisfy the inequality:
\begin{equation}
\mathbb{E}\left[W_n(e')\right]-\mathbb{E}\left[W_n(e'')\right]\geq \frac{1}{3}\left[n^2-5n+6\right]>0.
\label{ineq:1}
\end{equation}
\end{lemma}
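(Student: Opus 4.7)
The plan is to reduce the claim to a routine calculation using Lemma~\ref{centrallem}, together with a one-line optimisation over the admissible range of the parameter $k$. Since $W_n(e)$ is binomial with $k(n-k)(n-2)/2$ trials and success probability $2/3$, its expectation is
$$\mathbb{E}[W_n(e)] \;=\; \frac{k(n-k)(n-2)}{3}.$$
For a pendant edge $e''$ the split parameter is $k\in\{1,n-1\}$; in both cases $k(n-k)=n-1$, so $\mathbb{E}[W_n(e'')]=\frac{(n-1)(n-2)}{3}$. For an interior edge $e'$, $k$ ranges over $\{2,3,\dots,n-2\}$.

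Next, I would observe that the map $k\mapsto k(n-k)$ is concave and symmetric about $k=n/2$, so on the discrete interval $\{2,\dots,n-2\}$ it attains its \emph{minimum} at the endpoints, where $k(n-k)=2(n-2)$. Hence
$$\mathbb{E}[W_n(e')] \;\geq\; \frac{2(n-2)^2}{3}.$$
Subtracting the pendant-edge expectation gives
$$\mathbb{E}[W_n(e')]-\mathbb{E}[W_n(e'')] \;\geq\; \frac{(n-2)\bigl[2(n-2)-(n-1)\bigr]}{3} \;=\; \frac{(n-2)(n-3)}{3} \;=\; \frac{n^2-5n+6}{3},$$
which is exactly the bound claimed, and is strictly positive for all $n\geq 4$ (the relevant regime, since the puzzling step requires at least four taxa).

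There is essentially no obstacle here: the only point requiring any thought is recognising that the worst case for the interior edge is the \emph{most unbalanced} split, $k=2$ or $k=n-2$ (not the balanced split $k\approx n/2$, which would make the interior weight even larger and the inequality even easier). Once that is noted, the bound follows from arithmetic on the binomial means supplied by Lemma~\ref{centrallem}.
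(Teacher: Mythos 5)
Your proposal is correct and follows essentially the same route as the paper: compute both expectations from the binomial parameters of Lemma~\ref{centrallem}, observe that $k(n-k)$ is minimized over $k\in\{2,\dots,n-2\}$ at the endpoints (the paper does this by differentiating, you by concavity), and evaluate the difference at $k=2$ to get $\frac{1}{3}(n^2-5n+6)>0$ for $n\geq 4$.
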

\begin{proof}

$W_n(e'')$ and $W_n(e')$ are binomial random variables with the same probability of success $\frac{2}{3}$, but different number of trials $\binom{n-1}{2}$ and $\frac{k(n-k)(n-2)}{2}$, where $k \in \{2, \ldots, n-2\}$.  Thus
$$\mathbb{E}[W_n(e'')]=\frac{2}{3}\binom{n-1}{2},\qquad \mathbb{E}[W_n(e')]=\frac{2}{3}\frac{k(n-k)(n-2)}{2}.$$ %we have:

%$$\mathbb{E}\left[W_n(e')\right]-\mathbb{E}\left[W_n(e'')\right]=\frac{(k-1)}{3}n^2 - \mathcal{O}(n), \quad\text{where }\mathcal{O}(n)=\left(\frac{(k-1)(k+3)n-2(k^2-1)}{3}\right).$$

For a fixed $n$, $\mathbb{E}\left[W_n(e')\right]-\mathbb{E}\left[W_n(e'')\right]$ is a function of $k$. Therefore, to find the minimum of the difference between these two expected values, we need to find the value(s) of $k$ for which $\mathbb{E}\left[W_n(e')\right]-\mathbb{E}\left[W_n(e'')\right]$ is minimal.

Let $y=(n-2)(n-k)k-(n^2-3n+2)$, then $\frac{dy}{dk}=(n-2)(n-2k)$.
When $k=\frac{n}{2}$, $\frac{dy}{dk}=0$, $\frac{d^2y}{dk^2}<0$. Thus, there is a maximum at $k=\frac{n}{2}$, %as $n-1\geq k\geq 1$
%i.e. $n^2-5n+6 \leq y\leq \frac{n^3}{4}-\frac{3n^2}{2}+3n-2$.
and minimum occurs at $k=2$ or $k=n-2$.
Therefore, when $k=2$ or $k=n-2$,
$$%\begin{equation}
\frac{1}{3}\left[n^2-5n+6\right]\leq \mathbb{E}\left[W_n(e')\right]-\mathbb{E}\left[W_n(e'')\right]%\leq \frac{1}{3}\left[\frac{n^3}{4}-\frac{3n^2}{2}+3n-2\right].
%\end{equation}
$$
Moreover, it is easily shown that for $n>3$, $\frac{1}{3}\left[n^2-5n+6\right]>0$.
Therefore, $$\mathbb{E}\left[W_n(e')\right]-\mathbb{E}\left[W_n(e'')\right]\geq \frac{1}{3}\left[n^2-5n+6\right]>0.$$

\end{proof}

\begin{theorem}
\label{thm:2}
For any $e''\in E_n^{\rm P}$ and any $e'\in E_n^{\rm I}$,
\begin{equation*}
\mathbb{P}\left(W_n(e'')\geq W_n(e')\right)\leq 2\exp(-\frac{1}{576}n).
\end{equation*}
\end{theorem}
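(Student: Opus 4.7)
The plan is to convert the gap between the two means (given by Lemma~3) into a two-sided large-deviation statement via Hoeffding's inequality. Write $\mu''=\mathbb{E}[W_n(e'')]$ and $\mu'=\mathbb{E}[W_n(e')]$, and set $\Delta=\mu'-\mu''$, so Lemma~3 gives $\Delta\geq\frac{1}{3}(n^2-5n+6)=\frac{1}{3}(n-2)(n-3)$. Let $m=(\mu'+\mu'')/2$ be the midpoint. If $W_n(e'')\geq W_n(e')$, then either $W_n(e'')\geq m$ or $W_n(e')\leq m$, and in each case the corresponding variable has deviated from its own mean by at least $\Delta/2$. By the union bound,
\[
\mathbb{P}\!\left(W_n(e'')\geq W_n(e')\right)\leq \mathbb{P}\!\left(W_n(e'')-\mu''\geq \Delta/2\right)+\mathbb{P}\!\left(\mu'-W_n(e')\geq \Delta/2\right).
\]

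Each $W_n(\cdot)$ is a sum of independent $\{0,1\}$-valued trials (Lemma~\ref{centrallem}), so one-sided Hoeffding gives each term a bound of the form $\exp(-\Delta^2/(2N))$, where $N=N(e)=\frac{k(n-k)(n-2)}{2}$ is the number of trials. Next I would upper-bound $N$ uniformly over the two edges: the pendant case has $N''=(n-1)(n-2)/2$, while the interior case is maximised at $k=n/2$, giving $N'\leq n^2(n-2)/8$; both are dominated by $n^2(n-2)/8$. Plugging in the lower bound for $\Delta$ yields the exponent
\[
\frac{\Delta^2}{2N}\;\geq\;\frac{(n-2)^2(n-3)^2/9}{\,n^2(n-2)/4\,}\;=\;\frac{4(n-2)(n-3)^2}{9\,n^2}.
\]
It then suffices to verify $\frac{4(n-2)(n-3)^2}{9n^2}\geq \frac{n}{576}$, equivalently $256(n-2)(n-3)^2\geq n^3$, which one checks directly at $n=4$ and then by noting the left-hand side grows like $256 n^3$ against $n^3$ on the right, so the inequality holds for all $n\geq 4$ (the only regime in which $\Delta$ is positive anyway). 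Combining the two Hoeffding bounds gives the factor of $2$ in front of $\exp(-n/576)$.

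The main obstacle, as usual with bounds of this flavour, is not the idea but the bookkeeping of constants: one must ensure that the chosen midpoint split, the Hoeffding constant, and the worst-case upper bound on $N$ combine to give a clean exponent that is valid \emph{uniformly} in $k\in\{2,\dots,n-2\}$ and down to the smallest relevant $n$. A sharper argument (e.g.\ Chernoff with binomial-specific tails) would yield a much better constant than $1/576$, but the stated bound is comfortably inside what the rough Hoeffding estimate already provides, so no additional tricks are needed beyond the monotone check that the quadratic lower bound on $\Delta$ dominates the cubic upper bound on $N$ by a linear factor in $n$.
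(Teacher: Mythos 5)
Your proposal is correct and follows essentially the same route as the paper's proof: both use the mean gap from Lemma 3, split the event $\{W_n(e'')\geq W_n(e')\}$ by a union bound into two one-sided deviations of size at least half the gap, and apply one-sided Hoeffding to each binomial, differing only in bookkeeping (the paper rounds the gap down to $n^2/24$ before invoking Hoeffding, whereas you keep $\Delta/2$ and substitute the bounds at the end). The resulting constant $1/576$ and the factor of $2$ arise in exactly the same way in both arguments.
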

\begin{proof}
Let $W_n''=W_n(e'')-\mathbb{E}\left[W_n(e'')\right]$, \\$W_n'=W_n(e')-\mathbb{E}\left[W_n(e')\right]$, and $\beta=\mathbb{E}\left[W_n(e')\right]-\mathbb{E}\left[W_n(e'')\right]$. By {\bf Lemma \ref{lemma:E_diff}}, for $n\geq 4$, $\beta\geq 2dn^2$, where $d=\frac{1}{48}$. \\

Now, 
\begin{align*}
\mathbb{P}\left(W_n(e'')\geq W_n(e')\right) = & \mathbb{P}\left(W_n''- W_n' \geq \beta \right),\\
\leq &  \mathbb{P}\left(W_n''  \geq \frac{\beta}{2}\text{ or } -W_n'  \geq \frac{\beta}{2}\right), \\
\leq & \mathbb{P}\left(W_n''  \geq \frac{\beta}{2}\right)+\mathbb{P}\left( -W_n'  \geq \frac{\beta}{2}\right),\\
\leq & \mathbb{P}\left(W_n''  \geq dn^2 \right)+\mathbb{P}\left( -W_n'  \geq dn^2 \right).\\
\end{align*}

We now apply Hoeffding's Inequality to the two terms on the right. Suppose that $\{ Y_i, i =  1, 2, 3, ...,N \}$ are independent Bernoulli random variables, and let $Y=\sum_{i=1}^N Y_i$. By  Hoeffding's Inequality (Hoeffding 1963), we have:
$$\mathbb{P}\left(Y-\mathbb{E}(Y) \geq t\right) \leq \exp\left (-2t^2/N  \right),$$
$$\mathbb{P}\left(-(Y-\mathbb{E}(Y)) \geq t\right) \leq \exp\left (-2t^2/N \right). $$
Taking $Y=W_n'$ (and $W_n''$), $t=dn^2$ , and $N=\frac{k(n-k)(n-2)}{2}$ in the previous string of inequalities, gives:
\begin{align*}
 \mathbb{P}\left(W_n(e'')\geq W_n(e')\right) &\leq 2\exp(-\frac{1}{576\frac{k}{n}(1-\frac{k}{n})(1-\frac{2}{n})}n) ,\\ &\leq 2\exp(-\frac{1}{576}n).
\end{align*}

\end{proof}

\subsection*{\bf Proof of Inequality (2)}

\begin{proof}
We will use Theorem \ref{thm:2} to establish  Inequality (2). 
For $e''\in E_n^{\rm P}$, and $e'\in E_n^{\rm I}$, let $D$ be the event that $\displaystyle\min_{e''\in E_n^{\rm P}}\{W_n(e'')\}<\min_{e'\in E_n^{\rm I}}\{W_n(e')\}$,

Consider the complement of the event $D$, $$D^c=\left(\min_{e\in E_n^{\rm P}}\{W_n(e'')\}<\min_{e'\in E_n^{\rm I}}\{W_n(e')\}\right)^c,$$
that is there is an interior edge $e'$, such that \\$\displaystyle W_n(e')<\min_{e''\in E_n^{\rm P}}\{W_n(e'')\}$,  $W_n(e')\leq W_n(e'')$, $\forall e''\in E_n^{\rm P}$.
Let $A_{e'',e'}$ be the event that $W_n(e'')>W_n(e')$,
then we have, $\displaystyle D^c\subseteq \bigcup_{(e'',e')\in P\times I}A_{e'',e'}$, and so
$$\mathbb{P}\left(D^c\right)\leq \mathbb{P}\left(\bigcup_{(e'',e')\in P\times I}A_{e'',e'}\right).$$

According to Boole's inequality, 
\begin{equation}\mathbb{P}\left(\bigcup_{(e'',e')\in P\times I}A_{e'',e'}\right)\leq\sum_{(e'',e')\in P\times I}\mathbb{P}(A_{e'',e'}).
\label{ieq:boole_ineq}
\end{equation}
Now,  the number of pendent edge is $n$, i.e. $|P|=n$, and the number of interior edge is $n-3$, i.e. $|I|=n-3$. Thus, $|P\times I|=n(n-3)$, and so, by Theorem \ref{thm:2}, $\mathbb{P}(A_{e'',e'})=\mathbb{P}\left(W_n(e'')\geq W_n(e')\right)\leq 2\exp(-\frac{1}{576}n)$.
Thus,
\begin{equation}
\sum_{(e'',e')\in P\times I}\mathbb{P}(A_{e'',e'})\leq n(n-3)2\exp(-\frac{1}{576}n)\leq 2n^2 \exp(-\frac{1}{576}n).
\label{ieq:boole_ineq2}
\end{equation}
Therefore,
$$\mathbb{P}\left(\min_{e''\in E_n^{\rm P}}\{W_n(e'')\} \leq \min_{e'\in E_n^{\rm I}}\{W_n(e')\}\right)\geq 1- 2 n^2\exp(-\frac{1}{576}n).$$

\end{proof}

\subsection*{\bf Proof of Inequality (3)}

\begin{proof}
Since $\frac{k^2\exp(-ck)}{\exp(-ck/2)}=k^2\exp(-ck/2)$, and  $k^2\exp(-ck/2)\leq 1$ for $c\geq\frac{4\log{k}}{k}$ and $k> 1$, we have:
$$k^2\exp(-ck)\leq \exp(-\frac{c}{2} k), \quad \text{where } c\geq\frac{4\log{k}}{k} \textrm{ and } k> 1.$$
Thus
$\sum^\infty_{k=m}k^2 \exp(-ck)\leq \sum^\infty_{k=m} \exp(-\frac{c}{2}k), \quad \text{where } c\geq\frac{4\log{k}}{k} \textrm{ and } k> 1.$
where $\displaystyle\sum^\infty_{k=m}\exp(-\frac{c}{2}k)$ is the sum of a geometric series,
$$\sum^\infty_{k=m}\exp(-\frac{c}{2}k)= \frac{\exp(-cm/2)}{1-\exp(-c/2)}.$$
For $m\geq m_0$, $\exp(-cm/2)\leq \exp(-cm_0/2)$.
Therefore, \\ $\displaystyle\sum^\infty_{k=m}k^2 \exp(-ck)\leq \frac{\exp(-cm_0/2)}{1-\exp(-c/2)}$, where $c\geq\frac{4\log{k}}{k}$ and $k> 1$.
\end{proof}


\begin{thebibliography}{}

\bibitem{Bagchi1985}
Bagchi, A. and A.~K. Pal (1985).
\newblock Asymptotic normality in the generalized polya-eggenberger urn model,
  with an application to computer data structures.
\newblock {\em SIAM Journal on Algebraic and Discrete Methods\/}~{\em 6\/}(3).


\bibitem{Daubin2004}
Daubin, V. and H.~Ochman (2004).
\newblock Quartet mapping and the extent of lateral transfer in bacterial
  genomes.
\newblock {\em Molecular Biology and Evolution\/}~{\em 1}, 86--89.


\bibitem{Dress86}
Dress, A., A.~{von Haeseler}, and M.~Krueger (1986).
\newblock Reconstructing phylogenetic trees using variants of the ``four-point
  condition''.
\newblock {\em Studien zur Klassifikation\/}~(17), 299--305.


\bibitem{Felsenstein1981}
Felsenstein, J. (1981).
\newblock Evolutionary trees from dna sequences: A maximum likelihood approach.
\newblock {\em Journal of Molecular Evolution\/}~{\em 17}, 368--376.


\bibitem{Guindon2010}
Guindon, S., J.-F. Dufayard, V.~Lefort, M.~Anisimova, W.~Hordijk, and
  O.~Gascuel (2010).
\newblock New algorithms and methods to estimate maximum-likelihood
  phylogenies: Assessing the performance of phyml 3.0.
\newblock {\em Systematic Biology\/}~{\em 59\/}(3), 307--321.


\bibitem{Guindon2003}
Guindon, S. and O.~Gascuel (2003).
\newblock A simple, fast, and accurate algorithm to estimate large phylogenies
  by maximum likelihood.
\newblock {\em Systematic Biology\/}~{\em 52\/}(5), 696--704.


\bibitem{Jordan1869}
Jordan, C. (1869).
\newblock Sur les assemblages des lignes.
\newblock {\em Journal f\"{u}r die reine und angewandte Mathematik\/}~{\em 70},
  185--190.


\bibitem{Kang1975}
Kang, A. N.~C. and D.~A. Ault (1975).
\newblock Some properties of a centroid of a free tree.
\newblock {\em Information Processing Letters\/}~{\em 4\/}(1), 18--20.


\bibitem{Karr1993}
Karr, A.~F. (1993).
\newblock {\em Probability}.
\newblock Springer-Verlag.


\bibitem{Kotz2000}
Kotz, S., N.~Balakrishnan, N.~L, and Johnson (2000).
\newblock {\em Continuous Multivariate Distributions, Volume 1, Models and
  Applications\/} (2 ed.).
\newblock New York: Wiley.


\bibitem{Mahmoud2008}
Mahmoud, H. (2008).
\newblock {\em P\'{o}lya Urn Models}.
\newblock Chapman \& Hall / CRC.


\bibitem{McKenzie2000}
McKenzie, A. and M.~Steel (2000).
\newblock Distributions of cherries for two models of trees.
\newblock {\em Mathematical Biosciences\/}~{\em 164}, 81--92.


\bibitem{Mitchell1978}
Mitchell, S.~L. (1978).
\newblock Another characterization of the centroid of a tree.
\newblock {\em Discrete Mathematics\/}~{\em 24}, 277--280.


\bibitem{Nieselt2001}
Nieselt-Struwe, K. and A.~{von Haeseler} (2001).
\newblock Quartet-mapping, a generalization of the likelihood-mapping
  procedure.
\newblock {\em Molecular Biology and Evolution\/}~{\em 7\/}(18), 1204--1219.


\bibitem{Schmidt2002}
Schmidt, H.~A., K.~Strimmer, M.~Vingron, and A.~{von Haeseler} (2002).
\newblock {TREE-PUZZLE}: Maximum likelihood phylogenetic analysis using
  quartets and parallel computing.
\newblock {\em Bioinformatics\/}~{\em 18\/}(3), 502--504.


\bibitem{Smythe1996}
Smythe, R.~T. (1996).
\newblock Central limit theorems for urn models.
\newblock {\em Stochastic Processes and their Applications\/}~{\em 65},
  115--137.


\bibitem{Steel1993parsimony}
Steel, M.~A. (1993).
\newblock Distributions on bicoloured binary trees arising from the principle
  of parsimony.
\newblock {\em Discrete Applied Mathematics\/}~{\em 43}, 245--261.


\bibitem{Strimmer1997}
Strimmer, K., N.~Goldman, and A.~{von Haeseler} (1997).
\newblock Bayesian probabilities and quartet puzzling.
\newblock {\em Molecular Biology and Evolution\/}~{\em 2\/}(14), 210--211.


\bibitem{Strimmer1996}
Strimmer, K. and A.~{von Haeseler} (1996).
\newblock Quartet puzzling: a quartet maximum-likelihood method for
  reconstructing tree topologies.
\newblock {\em Molecular Biology and Evolution\/}.


\bibitem{Vinh2011}
Vinh, L.~S., A.~Fuehrer, and A.~{von Haeseler} (2011).
\newblock {Random Tree-Puzzle} leads to the {Yule-Harding Distribution}.
\newblock {\em Molecular Biology and Evolution\/}~{\em 28\/}(2), 873--877.


\end{thebibliography}
\end{document}